\DeclareMathOperator{\Prob}{{\mathbb P}}
\DeclareMathOperator{\Exp}{{\mathbb E}}
\DeclareMathOperator{\Var}{\mathbb{V}\mathrm{ar}}
\newcommand*{\NormDist}{\mathsf{Normal}}
\newcommand*{\LNDist}{\mathsf{Lognormal}}
\newcommand*{\WeibullDist}{\mathsf{Weibull}}
\newcommand*{\ParetoDist}{\mathsf{Pareto}}
\newcommand*{\ClaytonCop}{\mathsf{Clayton}}
\newcommand*{\GumbelCop}{\mathsf{GumbelHougaard}}
\newcommand*{\FrankCop}{\mathsf{Frank}}
\newcommand*{\convdistr}{\xrightarrow{\:\smash{\mathcal{D}}\:}}
\newcommand*{\iidDist}{\overset{\smash{\mathrm{iid}}}{\sim}}
\newcommand*{\indDist}{\overset{\smash{\mathrm{ind}}}{\sim}}
\newcommand*{\NL}{\mathbb{N}_+}
\newcommand*{\RL}{\mathbb{R}}
\newcommand*{\dd}{\mathop{}\!\mathrm{d}}
\newcommand*{\e}{\mathrm{e}}
\DeclareMathOperator{\csch}{csch}
\newcommand*{\ind}[1]{\mathbb{I}\{ #1 \}}
\newcommand*{\Lp}{\mathcal{L}}
\renewcommand{\hat}[1]{\widehat{#1}}
\renewcommand{\epsilon}{\varepsilon}
\renewcommand{\pi}{\uppi}
\newcommand*{\ih}{\mathrm{i}}
\newcommand*{\remQED}{{\mbox{\, \vspace{3mm}}} \hfill \mbox{$\Diamond$}}
\newcommand*{\Ftail}{\mkern 1mu\overline{\mkern-1mu{F}\mkern+1.25mu}}
\newcommand*{\for}[1]{\,,\qquad \text{for } #1}
\definecolor{Gray}{gray}{0.9}
\newcommand*{\bfI}{\bm{I}}
\newcommand*{\bfW}{\bm{W}}
\newcommand*{\bfX}{\bm{X}}
\newcommand*{\bfY}{\bm{Y}}
\newcommand*{\bfe}{\bm{e}}
\newcommand*{\bft}{\bm{t}}
\newcommand*{\bfx}{\bm{x}}
\newcommand*{\bfmu}{\bm{\mu}}
\newcommand*{\bfSigma}{\bm{\Sigma}}
\newcommand*{\bftheta}{\bm{\theta}}
\newcommand*{\bfTheta}{\bm{\Theta}}
\newcommand*{\bfzero}{\bm{0}}
\newcommand*{\bfone}{\bm{1}}
\newcommand{\Asymf}{\mathfrak{f}_S}
\newcommand{\AsymFTail}{\overline{\mathfrak{F}}_S}
\newcommand{\AsymfTrun}{\mathfrak{f}_{S|S>\gamma}}
\renewcommand{\phi}{\varphi}
\renewcommand{\rho}{\varrho}
\journalname{Statistics and Computing}
\begin{document}

\title{Rare tail approximation using asymptotics and $L^1$ polar coordinates}

\author{Thomas Taimre \and Patrick J.\ Laub}

\institute{T.\ Taimre \at
	School of Mathematics and Physics \\
	University of Queensland, Brisbane \\
	\email{t.taimre@uq.edu.au}
	\and
	P.\ J.\ Laub \at
	The University of Queensland, Brisbane \& \\
	Aarhus University, Aarhus \\
	\email{p.laub@[uq.edu.au$\mid$math.au.dk]}
}

\date{Received: \today / Accepted: date}

\maketitle

\begin{abstract}
In this work, we propose a class of importance sampling (IS) estimators
for estimating the right tail probability of a sum of continuous random variables
based on a change of variables to $L^1$ polar coordinates in which the radial and angular components
of the IS distribution are considered separately.

When the asymptotic behaviour of the sum is known we exploit it for the radial change of
measure, and the resulting estimator has the appealing form of the (known) asymptotic
multiplied by a random multiplicative correction factor.
Given we assume knowledge of the asymptotic behaviour of the sum in this framework, traditional
notions of efficiency that appear in the rare-event literature hold little practical meaning here.
Instead, we focus on the practical behaviour of the proposed estimator in the pre-asymptotic regime
for right tail probabilities between roughly $10^{-3}$ and $10^{-7}$.

The proposed estimator and procedure are applicable in both the heavy- and light-tailed
settings, as well as for independent and dependent summands.
In the case of independent summands, we find that our estimator compares favourably with
exponential tilting (iid light-tailed summands) and the
Asmussen--Kroese method (independent subexponential summands).

However, for dependent subexponential summands using the same simple angular distribution as for the independent case,
the performance of our estimator rapidly degenerates with increasing dimension, suggesting an open avenue for further research.
 \end{abstract}

\section{Introduction}

A typical problem in the field of rare-event estimation is to determine the probability
\begin{equation} \label{ProbabilityOfInterest}
\ell(\gamma) := \Prob(S > \gamma)
\end{equation}
where $S := X_1 + \dots + X_d$ for a random vector $\bfX:=(X_1,\dots,X_d)$ with
fixed $d \in \NL$
having joint probability density function (pdf) $f_{\bfX}$
and where the $\gamma \in \RL$ is large or increasing. In applications, we often wish to understand the behaviour of a combination of random factors, and hence the random variable (r.v.) $S$ is ubiquitous in real-world modeling problems. It can model, for example: aggregate risk or portfolio value for holding $d$ risky assets \cite{mcneil2015quantitative,Rueschendorf2013}, the aggregate losses for $d$ insurance policy claims \cite{asmussen2010ruin,klugman2012loss}, or the combined signal interference from $d$ wireless transmission sources \cite{fischione2007approximation}. Probabilities of the form \eqref{ProbabilityOfInterest} are used to understand how a system would behave under extreme scenarios such as a market crash, a power surge, or a natural disaster. One is typically interested not just in the quantity $\ell(\gamma)$ but also in the behaviour of the summands when the extreme event $\{ S > \gamma \}$ occurs.

This probability is available in closed-form for only a few basic cases, when the density of $S$ (which is a $d$-fold convolution) has a known solution, c.f. \cite{nadarajah2008review}. For example, when the summands are independent and identically distributed (iid) then it is sometimes simple to calculate (for example exponential, gamma, or normal summands, and in the discrete case binomial, geometric, or negative binomial summands)
and sometimes it is still intractable
(for example lognormal, Weibull, Laplace, or Beta summands). However, requiring the assumption of independence (let alone iid-ness) of the summands is a stifling restriction when modeling real-world events; a notorious example would be the partial blame of the 2008--9 global financial crisis on mathematicians' inappropriate use of a simplistic dependence model (the Gaussian copula) \cite{salmon2009recipe}.

When analytical solutions are unavailable, the next best option is numerical integration, and after that Monte Carlo integration (or quasi-Monte Carlo).
Numerical integration algorithms applied to
\[ \ell(\gamma) = \int_{\RL^d} \ind{ x_1 + \dots + x_d > \gamma} f_{\bfX}(\bfx) \dd \bfx \,,\]
where $\ind{A}$ denotes
the indicator of an event $A$ (taking value 1 if $A$ occurs and 0 otherwise),
are typically slow, inaccurate, and misleading. This is because the indicator is rarely 1, floating-point errors accumulate, and the curse of dimensionality applies for $d$ larger than about 2 or 3. Some of these algorithms attempt to estimate the error in their result, but there are few (if any) theoretical guarantees that these estimates are reliable.

Rare-event problems also cause difficulties for the crude Monte Carlo (CMC) estimator. This is obvious as the CMC estimator's relative error explodes for large $\gamma$ --- that is, the CMC estimator $\hat{\ell}_{\text{CMC}}(\gamma) := \ind{S > \gamma}$ has
\begin{align*}
&\hphantom{=}~\lim_{\gamma \to \infty} \text{RelativeError}\{ \hat{\ell}_{\text{CMC}}(\gamma) \} \\
&= \lim_{\gamma \to \infty} \frac{ \Var[ \hat{\ell}_{\text{CMC}}(\gamma) ] }{ \ell(\gamma)^2 }
= \lim_{\gamma \to \infty} \frac{  \ell(\gamma)[1-\ell(\gamma)] }{  \ell(\gamma)^2 } = \infty \,.
\end{align*}
Intuitively, the problem is because the indicator $\ind{S > \gamma}$ is eventually always 0 when $\gamma$ gets very large. In response, various variance reduction techniques have been applied so that there are now a large collection of estimators with better performance in this setting, c.f. `rare-event estimation' in \cite{kroese2013handbook,asmussen2007stochastic,glasserman2003monte}.

There is, of course, no silver bullet for the problem. Some estimators only apply to specific distributions (e.g. \cite{botev2017fast} for sums of lognormals, \cite{yao2016estimating} for sums of phase-type mixtures) or to certain classes of distributions (exponential tilting for light-tailed summands \cite{kroese2013handbook,asmussen2007stochastic}, hazard-rate twisting or the Asmussen--Kroese method \cite{asmussen2006improved} for heavy-tailed summands). Other estimators are general but require specifying either some extra information (e.g.\ availability of conditional distributions for conditional Monte Carlo \cite{asmussen2017conditional}, or an appropriate sampling distribution for use in importance sampling). The most general estimators --- such as the generalised splitting method, cross-entropy method, or Markov Chain Monte Carlo (MCMC) methods such as  \cite{chan2012improved} --- are usually computationally demanding, they often depend upon an intelligent selection of input parameters to perform efficiently, and are somewhat complicated.

Whilst one rarely has an exact expression for $\ell(\gamma)$, it is somewhat common to know an \emph{asymptotic approximation} to it, and this forms the basis for our proposed estimator. For example, if $\bfX \sim \mathsf{Lognormal}(\bfmu, \bfSigma)$  where $\bfmu \in \RL^d$ and $\bfSigma \in \RL^{d \times d}$ is positive definite (by which we mean that $\bfX=\exp(\bfY)$ component-wise, where $\bfY$ has
a multivariate normal distribution with mean vector $\bfmu$ and covariance matrix $\bfSigma$), then it has been shown that \cite{asmussen2008asymptotics}
\begin{equation} \label{SLNasymptotic}
\ell(\gamma) = \Prob(S > \gamma) \sim \sum_{i=1}^d \Prob(X_i > \gamma) \qquad \text{ as } \gamma \to \infty
\end{equation}
where $f(x) \sim g(x)$ denotes $\lim_{x \to \infty} f(x)/g(x) = 1$. Thus, one is tempted to label the right-hand side (RHS) of \eqref{SLNasymptotic} as $\hat{\ell}_{\mathrm{Asym}}(\gamma)$ and use it as an approximation for $\ell(\gamma)$. For certain values of $(\bfmu,\bfSigma)$ this asymptotic approximation can be accurate, in others it can be wildly inaccurate, depending on how fast the asymptotic approximation converges to the true value; see Figure~\ref{fig:slow_convergence} for an illustration where it is only when $\ell(\gamma) \lesssim 10^{-10}$ that the asymptotic form begins to give accurate estimates (i.e., $\hat{\ell}_{\mathrm{Asym}}(\gamma) / \ell(\gamma) > 0.99$). A discussion of this phenomenon is in \cite{botev2017fast}.

\begin{figure}[h]
	\centering
	\includegraphics[width=0.49\textwidth]{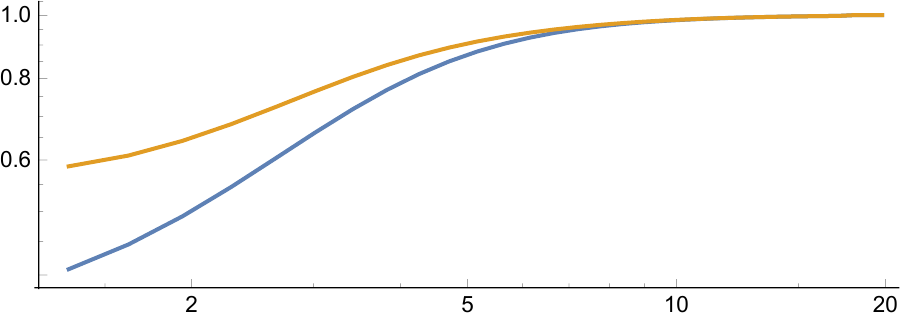} \\
	\includegraphics[height=0.85em]{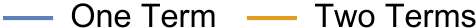}
	\caption{A comparison of $\ell(\gamma)$ and $\hat{\ell}_{\text{Asym}}(\gamma)$ for $X_1+X_2$ where $X_1 \sim \LNDist(0, 1)$ is independent to $X_2\sim\LNDist(0,\frac{3}{4})$. The $y$ axis plots $\hat{\ell}_{\text{Asym}}(\gamma) / \ell(\gamma)$, and the $x$ axis shows ${-}\log_{10} \ell(\gamma)$. The two curves describe two possible asymptotics: the yellow ``Two terms'' describes $\hat{\ell}_{\text{Asym}}(\gamma)$ as given in \eqref{SLNasymptotic}, whereas the blue ``One Term'' uses just the (eventually dominant) first term of this sum.}
\label{fig:slow_convergence}
\end{figure}

We propose an \emph{importance sampling} (IS) estimator which incorporates the asymptotic approximation and uses Monte Carlo sampling to estimate a correction to $\hat{\ell}_{\mathrm{Asym}}(\gamma)$
in order to construct an unbiased estimator $\hat{\ell}(\gamma)$ of  $\ell(\gamma)$.
The main drawback to IS is \emph{likelihood degeneration}, where one can face numerical errors if $\gamma$ or $d$ is extremely large.
The degeneration caused by a large $d$ is only partially compensated by our approach, so we take $d \le 100$.
To mitigate degeneration for large $\gamma$, we focus our attention of values of $\gamma$ which are moderately large but not unrealistically so.
Our goal is to provide an estimator which is practically useful when $\ell(\gamma)$ is between roughly $10^{-3}$ and $10^{-7}$.

The range of probabilities that we consider are unusual as they are less rare than much of the standard rare-event literature. The orthodox approach is to construct an estimator $\hat{\ell}(\gamma)$ and analyse the limit
\[ \lim_{\gamma\to\infty} \Var(\hat{\ell}(\gamma))/\ell(\gamma)^2 \,;\] if the limit is small (i.e., zero, bounded, or at least grows only at a polynomial rate) then the estimator is branded as a success (it has `vanishing relative error', `bounded relative error', or is `logarithmically efficient' respectively) regardless of its behaviour in the finite $\gamma$ situation. It can happen that these desirable limiting properties are only discernible in cases when the probabilities are truly minuscule (e.g. of order $10^{-10}$ or smaller); in a situation such as this, the model error would surely dominate any estimation error.

The remainder of this paper is structured as follows.
The estimator is introduced in Section~\ref{scn:Estimator}, the results from numerical comparisons are in Section~\ref{Sec:Results}, and Section~\ref{Sec:Conclusion} concludes the discussion.

\section{The $L^1$ polar estimator} \label{scn:Estimator}

\subsection{The general form}

We construct an estimator of the quantity $\ell(\gamma) := \Prob(S > \gamma)$, where $S = X_1 + \dots + X_d$ for large $\gamma$ by applying IS.
Standard IS theory says to construct an estimator which samples from a distribution close to $f_{\bfX \,|\, S>\gamma}$ (that is, the distribution of $\bfX$ conditioned on $S>\gamma$), rather than the original $f_{\bfX}$. To do this, we perform a change of variables to {\em Pickand's coordinates} \cite{FalkReiss2005} so
\[ \bfX \longrightarrow (S,\bfTheta) := \left(X_1+\dots+X_d, \bfX / [X_1+\dots+X_d] \right) \,. \]
The new density $f_{(S,\bfTheta)}$ is available (if $f_{\bfX}$ is known), and is
\[
f_{(S,\bfTheta)}(s,\bftheta) = f_{\bfX}(s\bftheta) \times |s|^{d-1}\,.
\]

Consider IS in this new form. Imagine that we have a density $g_{(S,\bfTheta)}$ which is in some way similar to $f_{(S,\bfTheta)}$, for which we also know the marginal density $g_S(s) := \int g_{(S,\bfTheta)}(s,\bftheta) \dd \bftheta$ and the conditional density $g_{\bfTheta|S}:=g_{(S,\bfTheta)}/g_S$. If we truncate $g_{(S,\bfTheta)}$ so that $S>\gamma$ a.s., and use this as the IS distribution, we obtain
\begin{equation} \label{eq:general_form_estimator}
\hat{l}_{\mathrm{IS}}(\gamma) := \frac{\overline{G}_S(\gamma)}{R} \sum_{r=1}^R \frac{ f_{(S,\bfTheta)}(S^{[r]}, \bfTheta^{[r]}) }{ g_S( S^{[r]} ) g_{\bfTheta \mid S}(\bfTheta^{[r]}\,|\, S^{[r]})}
\end{equation}
for $S^{[r]} \iidDist g_{S \mid S>\gamma}$, $\bfTheta^{[r]} \indDist g_{\bfTheta \mid S}( \,\cdot\, | S^{[r]})$,
where we define $\overline{G}_S(\gamma) :=  \int_{\gamma}^\infty g_S(s) \dd s$ and $g_{S \mid S>\gamma}:=g_S \ind{S>\gamma} / \overline{G}_S(\gamma)$.

We investigate estimators of the general form of \eqref{eq:general_form_estimator} which we call \emph{($L^1$) polar estimators}.
These are accurate when $g_{(S,\bfTheta)}= g_S \times g_{\bfTheta|S}$ closely resembles $f_{(S,\bfTheta)}=f_S \times f_{\bfTheta|S}$.
This is carried out in two steps: (i) by finding a \emph{radial approximation} $g_S$ which approximates $f_S$, and (ii) an \emph{angular approximation} $g_{\bfTheta|S}$ similar to $f_{\bfTheta|S}$, which we discuss in the following sections.

\subsection{The radial approximation}

As mentioned in the introduction, we consider utilising an asymptotic form of the sum in our estimator --- they form our radial approximation. To clarify the notation, we precisely define the relevant asymptotic forms:

\begin{definition}[Asymptotic form]
If for some function $\Asymf \in \Lp^1(\RL)$, with tail $\AsymFTail(s) = \int_s^\infty \Asymf(x) \dd x$, and constant $c_S \in \RL_+$, we have that
\begin{equation} \label{asymptotic_sum}
  f_S(s)\ \sim\ c_S \Asymf(s) \for{s \to \infty}
\end{equation}
then we say $\Asymf$ is an \emph{asymptotic form} of $f_S$. \remQED
\end{definition}

Thus, in the general form \eqref{eq:general_form_estimator} we will use $g_S = \Asymf$ when it is available and is a proper pdf. There are some technicalities for the cases when $\Asymf$ does not form a proper pdf which we defer from discussing in this work.
The estimator resulting from this radial approximation is
\begin{equation} \label{eq:asymptotic_estimator}
\hat{l}_{\mathrm{IS2}}(\gamma) := \frac{c_S \AsymFTail(\gamma)}{R} \sum_{r=1}^R \frac{ f_{(S,\bfTheta)}(S^{[r]}, \bfTheta^{[r]}) }{ c_S \Asymf( S^{[r]} )g_{\bfTheta \mid S}(\bfTheta^{[r]}\,|\, S^{[r]})}
\end{equation}
for $S^{[r]} \iidDist \AsymfTrun$ and $\bfTheta^{[r]} \indDist g_{\bfTheta \mid S}( \,\cdot\, | S^{[r]})$.

\begin{remark}
Define a ``correction factor'' to the asymptotic form, $\mathcal{R}(\gamma)$, by
$ \ell(\gamma)~=~\hat{\ell}_{\mathrm{Asym}}(\gamma) \mathcal{R}(\gamma) $; note that $\hat{\ell}_{\mathrm{Asym}}(\gamma) := c_S \AsymFTail(\gamma)$.
We can see that $\hat{\ell}_{\mathrm{IS2}}(\gamma)$ has a nice interpretation, because
\[ \hat{\ell}_{\mathrm{IS2}}(\gamma) = \hat{\ell}_{\mathrm{Asym}}(\gamma) \times \hat{\mathcal{R}}(\gamma) \,, \]
where $\hat{\mathcal{R}}(\gamma)$ is an unbiased Monte Carlo estimate of the factor $\mathcal{R}(\gamma)$. \remQED
\end{remark}

The recent applied probability literature has found the $\Asymf$ for a staggering array of distributions of $\bfX$. Perhaps the simplest case is when the $X_i$ are iid subexponential random variables. By definition (cf. \cite{foss2011introduction}), they satisfy
\begin{equation} \label{Subexponential_Asymptotics_IID}
f_S(s)\ \sim\ d \, f_1(s) \for{s \to \infty}\,.
\end{equation}
For sums of independent non-identically distributed (ind) subexponential variables (or for sums containing some subexponential and some lighter-tailed variables) we have
\begin{equation} \label{Subexponential_Asymptotics_Non_IID}
f_S(s)\ \sim\ \sum_{i=1}^d f_i(s)\ \sim\ \sum_{i\in I} f_i(s) \for{s \to \infty}
\end{equation}
where $I$ is the set of indices of slowest tail decay. The asymptotics in \eqref{Subexponential_Asymptotics_Non_IID} also hold in many regimes where dependence has been introduced, cf. \cite{foss2010sums,wuthrich2003asymptotic,alink2004diversification,alink2007diversification}.

A distribution can satisfy a stronger property called \emph{regular variation} which implies subexponentiality and hence the asymptotics above. Examples of regularly varying distributions are Cauchy, Fr\'{e}chet, and Pareto distributions \cite{bingham1989regular}. The lognormal and heavy-tailed Weibull distributions are subexponential but not regularly varying.

The Weibull distribution is interesting as it is a family which can be heavy-tailed, light-tailed (the Rayleigh distribution is a special case), or on the boundary between these (i.e.\ the exponential distribution). The asymptotic form for the heavy-tailed Weibull sum is covered by \eqref{Subexponential_Asymptotics_IID} and \eqref{Subexponential_Asymptotics_Non_IID} as the summands are subexponential. The difficulty in finding the asymptotics for the light-tailed case led the authors to investigate it in detail, leading to the paper \cite{asmussen2017tail} which uses results originally from \cite{balkema1993densities}.

\begin{proposition} \label{prop:light_weibull} Assume that $X_1, \dots, X_d$ are iid light-tailed $\mathsf{Weibull}(\beta, \lambda)$ where $\beta>1$, $\lambda \in \RL_+$, $d \ge 2$. Then
\begin{equation*}
c_S\AsymFTail(s)\ \sim\ \Bigl[ \frac{2\beta\pi}{\beta-1}\Bigr]^{(d-1)/2} d^{-1/2} \Big( \frac{s}{\lambda d} \Big)^{\beta(d-1)/2} \Ftail \Big( \frac{s}{d} \Big)^d
\end{equation*}
for $s \to \infty$.
\end{proposition}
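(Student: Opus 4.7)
The plan is to obtain the asymptotic tail of $S$ by applying Laplace's method (the saddle-point method) to the $d$-fold convolution representation of $f_S$, and then to integrate once more to pass from the density to the survival function. The key observation is that for $\beta > 1$ the cumulant $x \mapsto (x/\lambda)^\beta$ is strictly convex, so the most likely configuration given $\sum x_i = s$ is the \emph{balanced} one $x_1 = \dots = x_d = s/d$; this contrasts sharply with the heavy-tailed case ($\beta < 1$), where a single large summand dominates. Write
\begin{equation*}
f_S(s) = \int_{\RL^{d-1}} \prod_{i=1}^d f(x_i) \, \dd x_1 \cdots \dd x_{d-1}
\end{equation*}
with $x_d = s - \sum_{i=1}^{d-1} x_i$, and pull out the exponent to obtain $f_S(s) = (\beta/\lambda)^d \int \prod (x_i/\lambda)^{\beta-1} \exp\bigl(-\sum(x_i/\lambda)^\beta\bigr) \dd x_1 \cdots \dd x_{d-1}$.

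Next I would Taylor expand around the balanced saddle point. Setting $y_i = x_i - s/d$ and using $\sum y_i = 0$ to eliminate the linear term, the exponent becomes
\begin{equation*}
\sum_{i=1}^d (x_i/\lambda)^\beta = d\Bigl(\frac{s}{d\lambda}\Bigr)^{\beta} + \frac{\beta(\beta-1)}{2\lambda^2}\Bigl(\frac{s}{d\lambda}\Bigr)^{\beta-2} \sum_{i=1}^d y_i^2 + O\Bigl(s^{\beta-3}\sum y_i^3\Bigr),
\end{equation*}
while the polynomial prefactor reduces at leading order to $(s/(d\lambda))^{d(\beta-1)}$. After changing variables to $y_1,\dots,y_{d-1}$ (with $y_d = -\sum_{i<d} y_i$), the quadratic form becomes $y^\tr M y$ with $M = I + J$ of determinant $d$, whose Gaussian integral gives the factor $(2\pi/a)^{(d-1)/2}/\sqrt{d}$ with $a = \beta(\beta-1)\lambda^{-2}(s/(d\lambda))^{\beta-2}$. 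Collecting the pieces and using $\overline{F}(s/d) = \exp\bigl(-(s/(d\lambda))^\beta\bigr)$, together with $f(s/d) = (\beta/\lambda)(s/(d\lambda))^{\beta-1}\overline{F}(s/d)$, yields
\begin{equation*}
f_S(s) \sim \frac{\beta}{\lambda\sqrt{d}} \Bigl(\frac{2\pi\beta}{\beta-1}\Bigr)^{(d-1)/2} \Bigl(\frac{s}{d\lambda}\Bigr)^{\beta(d+1)/2 - 1} \overline{F}(s/d)^d.
\end{equation*}

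Finally, to pass from $f_S$ to $\overline{F}_S$, I would invoke the standard tail-integration lemma: for $\phi(t) := d(t/(d\lambda))^\beta$ one has $\phi'(t) = (\beta/\lambda)(t/(d\lambda))^{\beta-1} \to \infty$, and so $\int_s^\infty p(t) e^{-\phi(t)} \dd t \sim p(s) e^{-\phi(s)}/\phi'(s)$ whenever $p$ varies polynomially. Applying this divides the above expression by $(\beta/\lambda)(s/(d\lambda))^{\beta-1}$, which reduces the exponent of $s/(d\lambda)$ from $\beta(d+1)/2 - 1$ to $\beta(d-1)/2$ and cancels the leading $\beta/\lambda$, leaving exactly the claimed asymptotic. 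The main technical obstacle is making the Laplace approximation rigorous: one must show that the contributions away from a shrinking neighbourhood of $(s/d,\dots,s/d)$ (including the region where some $x_i \le 0$, ruled out by the positive support of the Weibull) are negligible compared to the Gaussian leading term. This is most cleanly done by rescaling $y_i = s^{1-\beta/2} z_i$, verifying uniform control of the cubic remainder on compact sets in $z$, and using the exponential decay of the Gaussian factor to dominate the tails in $z$; these steps are essentially the content of the density-convolution results in \cite{balkema1993densities} which are invoked in \cite{asmussen2017tail}.
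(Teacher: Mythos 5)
Your Laplace-method derivation is correct and, after checking, reproduces the claimed constant exactly: expanding the exponent about the balanced point $x_i=s/d$ and using $\sum y_i=0$ gives the quadratic form $y^\tr(I+J)y$ of determinant $d$, the Gaussian integral yields $(2\pi/a)^{(d-1)/2}/\sqrt{d}$ with $a=\beta(\beta-1)\lambda^{-2}(s/(d\lambda))^{\beta-2}$, and dividing $f_S$ by $\phi'(s)=(\beta/\lambda)(s/(d\lambda))^{\beta-1}$ reduces the $(s/(d\lambda))$-exponent from $\beta(d+1)/2-1$ to $\beta(d-1)/2$ as required. The paper itself offers no proof of Proposition~\ref{prop:light_weibull} but simply cites \cite{asmussen2017tail} (which builds on \cite{balkema1993densities}); your saddle-point argument at the balanced configuration, with the tail-integration step to pass from density to survival function, is precisely the mechanism at the heart of those references, so this is the same approach rather than a new route.
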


The exposition in \cite{asmussen2017tail} details this and more general asymptotics (i.e.\ the independent but non-identically distributed case, and when the variables are not exactly Weibull but are `Weibull-like').

By its very construction, one would expect the estimator utilising an asymptotic form for the right-tail, \eqref{eq:asymptotic_estimator}, to
enjoy good efficiency properties as $\gamma\to\infty$.  As mentioned in the introduction, our goal is to provide a practically useful
estimator for `moderately' rare problems, in the range of $\gamma$ before the asymptotic regime takes hold.
As such, it is our view that the orthodox notions of efficiency have little meaning in our setting.
Nevertheless, we note that it is straightforward to verify that if the
ratio $f_{\bfTheta \mid \gamma}/g_{\bfTheta \mid \gamma}$ remains bounded by some finite constant $K\geq 1$ for all $\bftheta$ as $\gamma\to\infty$, then the estimator \eqref{eq:asymptotic_estimator} enjoys
bounded relative error, and if $K=1$ then the estimator enjoys vanishing relative error.

\subsection{The angular approximation}

The choice of angular approximation is not as obvious as was the choice of radial approximation.
Finding a conditional density $g_{\bfTheta \mid S}$ which is similar to $f_{\bfTheta \mid S}$ has little precedent in the literature.

Instead of taking an $S$ which is larger than $\gamma$ and asking `what is the distribution of $\bfTheta$ given this $S$?', we can instead ask `what is the distribution of $\bfTheta$ given $S > \gamma$?'.
This is the same situation that is studied in multivariate extreme value theory, where the spectral density characterises the behaviour of $f_{\bfTheta \mid S>\gamma}$ in
the limit as $\gamma\to\infty$ \cite{deHaanResnick1977}.
This second conditional distribution will resemble the first in cases that $\Exp[ S-\gamma \mid S > \gamma]$ converges quickly to zero when $\gamma$ becomes large.
Moreover, we have a computation benefit to finding a $g_{\bfTheta \mid S > \gamma}$ which is similar to $f_{\bfTheta \mid S > \gamma}$ as this distribution will be constant across all Monte Carlo iterates, in contrast to $g_{\bfTheta \mid S^{[r]}}$ and $f_{\bfTheta \mid S^{[r]}}$.

Nevertheless, when it is possible, we follow the same approach as the radial approximation and utilise some asymptotic information. However, we note that if one simply re-uses the previous asymptotic form, that is
\[ f_{\bfTheta \mid S}(\bftheta | s) = \frac{f_{S,\bfTheta}(s, \bftheta)}{f_S(s)} \sim \frac{f_{\bfX}(s \bftheta) |s|^{d-1}}{\Asymf(s)} =: g_{\bfTheta \mid S}(\bftheta | s) \,, \]
which may appear natural, then the estimator \eqref{eq:asymptotic_estimator} degenerates to the deterministic
\begin{equation*} \label{eq:double_asymptotic_estimator}
\hat{l}_{\mathrm{IS2}}(\gamma) := \frac{c_S \AsymFTail(\gamma)}{R} \sum_{r=1}^R 1 = c_S \AsymFTail(\gamma) \,.
\end{equation*}
Moreover, if it is known that $f_{\bfTheta \mid S > \gamma}$ degenerates in the limit [e.g.\ to a point mass at $1/d$ in each coordinate (perfect extremal dependence) or is degenerate on axes (independence in the extreme)],
this does not tell us what we should do for finite $\gamma$.

Indeed, when the summands are iid subexponentials, then the distribution of $(\bfTheta \mid S = s)$ as $s\to\infty$ degenerates to a discrete distribution over the $d$-dimensional unit vectors $\bfe_1$, \dots, $\bfe_d$
(with $\bfe_i$ having a single 1 in the $i$-th coordinate and zeros in all other coordinates).
This is just a re-casting of the principle of the single big jump (cf.\ \cite{foss2011introduction}).
Unfortunately, for finite $\gamma$ we cannot use this directly as in this case the likelihood ratio appearing in \eqref{eq:general_form_estimator} is not well defined.
One density, which we call the \emph{optimistic density} (see the algorithm below), that is not degenerate (and therefore will have well-defined likelihood ratio) but is
asymptotically equivalent to $(\bfTheta \mid S = s)$ for the case of ind subexponential summands is
\begin{align} \label{OptimisticPDF}
g_{\bfTheta \mid S}(\bftheta \mid s)
&= |s|^{d-1} \sum_{i=1}^d p_i(s) f_{\bfX_{-i}}(s \bftheta_{-i}) \ind{ \theta_i = 1-\bfone \cdot \bftheta_{-i} }
\end{align}
where $\bfX_{-i}$ and $\bftheta_{-i}$ denote the $(d-1)$-dimensional vectors  obtained from
$\bfX$ and $\bftheta$ by removing the elements in $i$-th coordinates ($X_i$ and $\theta_i$, respectively), and
the $p_i$ functions are defined by
\begin{equation} \label{optim_pi}
p_i(s) = \frac{\Ftail _i(s)}{\sum_{j=1}^d \Ftail _j(s)} \,.
\end{equation}
Algorithm~\ref{Alg:Optimism} shows a method for sampling from this $g_{\bfTheta \mid S}(\bftheta \mid s)$, and Proposition~\ref{prop:optimistic_asymptotic} shows that it
has limiting distribution consistent with \eqref{Subexponential_Asymptotics_Non_IID} as $s \to \infty$.

\begin{algorithm}[h]
\caption{Sampling from the optimistic angular density}\label{Alg:Optimism}
\begin{algorithmic}[1]
\Procedure{Optimistic}{$s$, $F_1$, \dots, $F_d$}
\State Simulate index $I$ in $\{1, \dots, d\}$ by $\Prob(I=i) = p_i(s)$ from \eqref{optim_pi}.
\For{$i = 1$ \textbf{to} $d$ \textbf{except} $I$}
\State $X_i^* \gets$ Random sample from $F_i$
\EndFor
\State $X_I^* \gets s - \sum_{i \not= I} X_i$ \Comment{This can be negative, but we are optimistic}
\State \textbf{return} $\bfTheta \gets \bfX^* / s$
\EndProcedure
\end{algorithmic}
\end{algorithm}

When the subexponential summands are only independent in the extreme, a simple
generalisation of Algorithm~\ref{Alg:Optimism} is to replace Lines~3 to~5 with
taking a random sample $\bfX^*$ from $f_{\bfX}$.

\begin{proposition} \label{prop:optimistic_asymptotic}

The optimistic density \eqref{OptimisticPDF} converges as $s \to \infty$ to the singular density
\begin{equation} \label{OptimisticLimitingDensity}
g_\infty(\bftheta) := \sum_{i=1}^{d} p_i \, \ind { \bftheta = \bfe_i } \,,
\end{equation}
where $p_i = \lim_{s\to\infty} p_i(s)$ for $i=1,\dots,d$.
\end{proposition}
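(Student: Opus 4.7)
My plan is to interpret the stated convergence as weak convergence of probability measures, since the limiting object $g_\infty$ is purely atomic on $\{\bfe_1,\dots,\bfe_d\}$ and cannot arise as a pointwise limit of a density on the $(d{-}1)$-dimensional surface $\bfone\cdot\bftheta=1$. I would first verify that Algorithm~\ref{Alg:Optimism} correctly samples from \eqref{OptimisticPDF}: a single change of variables $\bfx_{-i}\mapsto\bftheta_{-i}=\bfx_{-i}/s$ (with Jacobian $|s|^{d-1}$) transforms each summand in the mixture \eqref{OptimisticPDF} into the joint density of $(X_1/s,\dots,X_{i-1}/s,X_{i+1}/s,\dots,X_d/s)$, with $X_j\indDist F_j$, on the event $\{I=i\}$ of probability $p_i(s)$. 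This gives the stochastic representation $\bfTheta^{(s)}=\bfX^*/s$ that I will work with for the rest of the argument.

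Next, conditional on $\{I=i\}$, I would show $\bfTheta^{(s)}\convas\bfe_i$ as $s\to\infty$. The key observation is elementary: each $X_j$ for $j\ne i$ is a fixed, proper random variable (its distribution does not depend on $s$), so $X_j/s\convas 0$. This immediately gives $\bftheta_{-i}\convas\bfzero$, and hence the deterministic constraint $\theta_i=1-\bfone\cdot\bftheta_{-i}\convas 1$ forces the remaining coordinate to concentrate at $1$. By the continuous mapping theorem we then have $\bfTheta^{(s)}\mid\{I=i\}\convdistr\bfe_i$.

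Finally, I would stitch the two pieces together. For any bounded continuous $h:\RL^d\to\RL$, conditioning on $I$ yields
\[ \Exp\bigl[h(\bfTheta^{(s)})\bigr] = \sum_{i=1}^d p_i(s)\,\Exp\bigl[h(\bfTheta^{(s)})\mid I=i\bigr]. \]
Bounded convergence inside each conditional expectation gives $\Exp[h(\bfTheta^{(s)})\mid I=i]\to h(\bfe_i)$, and since the outer sum is finite and $p_i(s)\to p_i$ by hypothesis, we may pass to the limit termwise to obtain $\sum_{i=1}^d p_i h(\bfe_i)$, which is exactly $\Exp_{g_\infty}[h(\bfTheta)]$ under the singular law \eqref{OptimisticLimitingDensity}.

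The only genuine obstacle is fixing the right notion of convergence at the outset; once I commit to weak convergence, the argument amounts to the classical single-big-jump heuristic dressed up as a continuous-mapping calculation. Notably, no subexponential (or even tail) hypothesis on the $F_j$ is needed for the convergence itself---properness of each marginal $F_j$ is what drives $X_j/s\convas 0$, and the existence of the weights $p_i=\lim_s p_i(s)$ is built into the proposition's statement.
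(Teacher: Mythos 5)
Your proof is correct, but it takes a genuinely different route from the paper's. The paper computes the characteristic function of $g_{\bfTheta \mid S}(\cdot \mid s)$ directly: conditioning on $I$ as you do, it writes $\phi_{g_{\bfTheta|S}}(\bft \mid s) = \sum_{j} p_j(s)\,\e^{\ih t_j}\,\phi_{\bfX_{-j}}\bigl((\bft_{-j}-t_j\bfone)/s\bigr)$, then lets $s\to\infty$ so that the argument of each $\phi_{\bfX_{-j}}$ shrinks to $\bfzero$ (where the characteristic function is $1$ by continuity) and $p_j(s)\to p_j$; Lévy's continuity theorem then identifies the limit $\sum_j p_j\,\e^{\ih \bft^\top\bfe_j}$ as the characteristic function of the discrete law $g_\infty$. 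You instead work at the level of random variables: you exploit the stochastic representation $\bfTheta^{(s)}=\bfX^*/s$ from Algorithm~\ref{Alg:Optimism}, couple across $s$ by reusing the same $X_j\sim F_j$ so that $X_j/s\convas 0$, apply continuous mapping to get $\bfTheta^{(s)}\mid\{I=i\}\Rightarrow\bfe_i$, and finish by bounded convergence and finiteness of the mixture. Both proofs rely on the same two ingredients (properness of each $F_j$ and existence of $\lim_s p_i(s)$), and both are really weak-convergence arguments; the paper's is a compact Fourier computation, while yours is more elementary, makes the single-big-jump heuristic literally visible, and has the merit of being explicit that the claimed ``convergence of densities'' must be read as convergence in distribution since the limit is singular with respect to the simplex's surface measure. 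One small point to tighten: when you write $X_j/s\convas 0$, you should say outright that this is under the coupling that fixes $X_j$ across $s$; the uncoupled statement is only convergence in probability, which would still suffice but is a different claim.
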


\begin{proof}
For some $\bft=(t_1,\dots,t_d)' \in \mathbb{R}^d$, the characteristic function of $g_{\bfTheta \mid  S}$ is
\[
\phi_{g_{\bfTheta \mid  S}}(\bft \mid s)
= \Exp \exp\left(\ih\,\bft^\top \bfTheta \right)
= \Exp \left[ \exp\left(\ih\,\frac{\bft}{s}^\top \bfX^* \right) \right]
\]
where $\bfX^* = s \bfTheta$ as in Algorithm~\ref{Alg:Optimism}.

So, with $I$ as the discrete variable defined in Algorithm~\ref{Alg:Optimism}, we have
\begin{align*}
&\hphantom{=}~\phi_{g_{\bfTheta \mid  S}}(\bft \mid s) \\
&= \sum_{j=1}^d p_j(s) \Exp \left[ \e^{ \ih\,\frac{\bft}{s}^\top \bfX^* } \,\Big|\, I = j \right] \\
&= \sum_{j=1}^d p_j(s) \Exp \left[ \e^{ \ih \left[ \frac{\bft_{-j}}{s}^\top \bfX^*_{-j} + \frac{t_j}{s} (s - \bfone^\top \bfX^*_{-j}) \right] } \,\Big|\, I = j \right] \\
&= \sum_{j=1}^d p_j(s) \e^{\ih t_j} \Exp \left[ \e^{ \ih\,\frac{(\bft_{-j} - t_j \bfone)}{s}^\top \bfX_{-j} } \right] \\
&= \sum_{j=1}^d p_j(s) \e^{\ih t_j} \phi_{\bfX_{-j}} \left(  \frac{(\bft_{-j} - t_j \bfone)}{s} \right) \,.
\end{align*}
Therefore,
\[
\lim_{s \to \infty} \phi_{g_{\bfTheta | S}}(\bft ; s)
= \sum_{j=1}^d p_j \e^{\ih t_j}
= \sum_{j=1}^d p_j \e^{\ih \bft^\top \bfe_j}
\]
which corresponds to the singular density as in \eqref{OptimisticLimitingDensity}.
\end{proof}

\begin{remark}
The polar estimator for ind subexponential summands with the optimistic angular approximation \eqref{OptimisticPDF} simplifies to
\begin{align*}
&\hphantom{=}~\hat{l}_{\mathrm{IS2}}(\gamma) \\
&= \frac{c_S \AsymFTail(\gamma)}{R} \sum_{r=1}^R \frac{\mathrm{HM}( f_{X_1}( S^{[r]} \Theta^{[r]}_1), \dots, f_{X_d}( S^{[r]} \Theta^{[r]}_d) )}{c_S \Asymf( S^{[r]} )}
\end{align*}
where $S^{[r]} \iidDist \AsymfTrun$, $\bfTheta^{[r]} \indDist g_{\bfTheta \mid S}( \,\cdot\, | S^{[r]})$, and $\mathrm{HM}(\dots)$ is the harmonic mean of the inputs. \remQED
\end{remark}

The conditional angular asymptotic distribution is more challenging to obtain in the case of light-tailed summands. The following example shows these distributions differ qualitatively when different copulas are considered.

\begin{example}
Consider $X_1$ and $X_2$ to be \textsf{Exp}(1) variables which are: i) independent, ii) \textsf{Clayton}(1) dependent, or iii) \textsf{Ali-Mikhail-Haq}(-1) dependent. The sum densities can be calculated explicitly and
are given by
\[ f_S^{\mathrm{Ind}}(s) = s \e^{-s} \quad \text{for } s > 0 \,, \]
\[ f_S^{\mathrm{Cla}}(s) = \frac{2 - 2 \cosh(s) + s \sinh(s)}{(\cosh(s)-1)^2} \quad \text{for } s > 0 \,, \]
\[ f_S^{\mathrm{AMH}}(s) = 8 \csch(s)^3 \sinh(s/2)^4 \quad \text{for } s > 0 \,, \]
respectively.
Hence, for $s>0$ and $\theta \in (0,1)$, we have angular densities
\[ f^{\mathrm{Ind}}_{\Theta_1 | S}(\theta | s) = 1 \,, \]
\[ f^{\mathrm{Cla}}_{\Theta_1 | S}(\theta | s) = \frac{ s \e^{-s \theta}(\e^s - \e^{s \theta})(\e^{s \theta}-1)}{2 + s - 2\e^s + s \e^s} \,, \]
\[ f^{\mathrm{AMH}}_{\Theta_1 | S}(\theta | s) = \frac{ s \e^{-s \theta}(\e^s + \e^{2 s \theta})}{2(\e^s - 1)}  \,, \]
respectively. It is interesting to note that the asymptotic independence of the Clayton copula
would indicate that $f^{\mathrm{Cla}}_{\Theta_1 | S}(\theta | s) / f^{\mathrm{Ind}}_{\Theta_1 | S}(\theta | s) \to 1$ as $s\to\infty$ which is indeed the case. In contrast, $f^{\mathrm{AMH}}_{\Theta_1 | S}(\theta | s)$ degenerates to a pair of atoms at 0 and 1 as $s \to \infty$.  \remQED
\end{example}

One (light-tailed) case where we can determine an asymptotic angular distribution is for light-tailed Weibull sums. The angular asymptotic can be extracted from the results in \cite{asmussen2017tail},
and appears as follows.

\begin{proposition} \label{prop:light_weibull_angles}
Say $X_1, \dots, X_d$ are iid and are distributed as light-tailed $\mathsf{Weibull}(\beta, 1)$ with survival function $\Ftail(x)=\e^{-x^\beta}$ where $\beta>1$, $d \ge 2$.
Define the vector function $\bfW(x)$ component-wise by
\[ W_i(x) = \omega(x) ( X_i - x/d) \,, \quad \text{ for } i=1,\dots,d\,, \]
where $\omega(x) := \sqrt{2  \beta (\beta-1) (x/d)^{\beta-2}}$.
Then as $\gamma \to \infty$ we have
\[ (\bfW(\gamma) \mid S > \gamma) \convdistr \NormDist\left(\bfzero, (1 - \rho) \bfI + \rho \right) \,,\]
where $\rho = -1/(d-1)$.
\end{proposition}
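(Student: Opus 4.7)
The plan is to apply a Laplace / saddle-point approximation to the joint density of $(X_1,\dots,X_d)$ given $S=s$ for $s\approx\gamma$, and then integrate against the sharply-concentrated conditional law of $S$ given $S>\gamma$. By the Pickand-style disintegration already used in the paper, the conditional density of $\bfX$ given $S=s$ is supported on the simplex $\{\bfx\in\RL_+^d : \sum x_i=s\}$ and proportional to $\prod_i f_{X_i}(x_i)=\beta^d\bigl(\prod_i x_i\bigr)^{\beta-1}\exp\bigl(-\sum_i x_i^\beta\bigr)$. Because $\beta>1$, the exponent $\phi(\bfx)=\sum x_i^\beta$ is strictly convex, so by symmetry and Lagrange multipliers its unique minimiser on the hyperplane $\{\sum x_i=s\}$ is the centre point $\bfx^\star=(s/d)\bfone$.

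Writing $\bfz=\bfx-\bfx^\star$ and Taylor-expanding about $\bfx^\star$, the linear term $\beta(s/d)^{\beta-1}\sum z_i$ vanishes on $\{\sum z_i=0\}$, the quadratic term equals $\tfrac12\beta(\beta-1)(s/d)^{\beta-2}\sum z_i^2$, and the polynomial prefactor $\prod x_i^{\beta-1}$ is smooth and reduces to $(s/d)^{d(\beta-1)}$ at leading order. Hence the conditional density of $\bfz$ given $S=s$ is, for $s$ large, asymptotically that of a singular Gaussian on $\{\sum z_i=0\}$ with diagonal precision $\beta(\beta-1)(s/d)^{\beta-2}$. Parametrising the hyperplane by the first $d-1$ coordinates, the precision matrix becomes $\beta(\beta-1)(s/d)^{\beta-2}(\bfI+\bfone\bfone^\tr)$, and a Sherman--Morrison inversion produces a covariance proportional to $\bfI-\tfrac1d\bfone\bfone^\tr$. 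Rescaling by $\omega(s)=\sqrt{2\beta(\beta-1)(s/d)^{\beta-2}}$ then yields a Gaussian with the compound-symmetric covariance of the claimed form, having intra-class correlation $\rho=-1/(d-1)$; the limit is supported on the hyperplane $\{\sum w_i=0\}$, consistent with the singularity of $(1-\rho)\bfI+\rho\bfone\bfone^\tr$ at $\rho=-1/(d-1)$.

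To pass from conditioning on $\{S=s\}$ to $\{S>\gamma\}$, I would use that for light-tailed Weibull the conditional excess $(S-\gamma\mid S>\gamma)$ has order $\gamma^{-(\beta-1)}$ (a Mills-type estimate following immediately from \prop{prop:light_weibull}), so $\omega(S)=\omega(\gamma)(1+o(1))$ and the drift $\omega(\gamma)(S-\gamma)/d$ added to each $W_i$ is $o_p(1)$ for every $\beta>1$. Integrating the $S=s$ limit against the concentrated density of $S\mid S>\gamma$ therefore preserves the Gaussian limit.

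The main obstacle is making the Laplace expansion uniform in $s$ over the relevant window near $\gamma$, and in particular controlling the cubic Taylor remainder in $\bfz$-coordinates together with the multiplicative prefactor $\prod x_i^{\beta-1}$; these regularity estimates are essentially the content of \cite{asmussen2017tail} (building on \cite{balkema1993densities}) and can be imported here. A secondary but important subtlety is that the limit Gaussian is singular along $\bfone$, so the cleanest route to weak convergence is the Cram\'er--Wold device, establishing convergence of the characteristic function $\Exp[\exp(\ih\bft^\tr\bfW(\gamma))\mid S>\gamma]$ for each $\bft\in\RL^d$, which reduces the problem to one-dimensional Laplace asymptotics already developed in the cited references.
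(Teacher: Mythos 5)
Your route --- Laplace/saddle-point expansion of the conditional law of $\bfX$ given $S=s$ about the centroid $(s/d)\bfone$, identification of the Hessian $\beta(\beta-1)(s/d)^{\beta-2}\bfI$, Sherman--Morrison inversion on the constrained hyperplane, a Mills-type estimate to pass from $\{S=s\}$ to $\{S>\gamma\}$, and Cram\'er--Wold to handle the singular limit --- is the natural and essentially correct one. Note that the paper itself does not prove this proposition: it only states that the result ``can be extracted from the results in'' the cited reference, so you are reconstructing an argument rather than paralleling a proof in the text.

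There is one concrete step you assert rather than compute, and it does not come out as you say. Conditionally on $S=s$, parametrising the hyperplane by $(X_1,\dots,X_{d-1})$, your Laplace expansion gives an approximate Gaussian with precision $\beta(\beta-1)(s/d)^{\beta-2}(\bfI+\bfone\bfone^\tr)$, hence covariance $[\beta(\beta-1)(s/d)^{\beta-2}]^{-1}\bigl(\bfI-\tfrac1d\bfone\bfone^\tr\bigr)$, exactly as you write. But multiplying by $\omega(s)^2 = 2\beta(\beta-1)(s/d)^{\beta-2}$ yields the limiting covariance $2\bigl(\bfI-\tfrac1d\bfone\bfone^\tr\bigr)$, i.e.\ diagonal entries $2(d-1)/d$ and off-diagonals $-2/d$. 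The matrix $(1-\rho)\bfI+\rho\bfone\bfone^\tr$ with $\rho=-1/(d-1)$ has diagonal $1$ and off-diagonal $-1/(d-1)$; your covariance is this matrix multiplied by $2(d-1)/d$, which equals $1$ only when $d=2$. So the intra-class correlation $\rho=-1/(d-1)$ is recovered for all $d$, but the claim that the rescaling by the stated $\omega$ ``yields a Gaussian with the compound-symmetric covariance of the claimed form'' does not follow --- you should compute the diagonal entry explicitly rather than assert it. Either the normalising factor should be $\sqrt{d/(d-1)}$ in place of $\sqrt{2}$, or the limit covariance is $2(\bfI-\tfrac1d\bfone\bfone^\tr)$; as written, the derivation and the statement are not consistent beyond $d=2$, and you should resolve the factor before importing the regularity estimates from Balkema--Kl\"uppelberg--Resnick. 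The remaining ingredients --- vanishing of the drift $\omega(\gamma)(S-\gamma)/d$ for every $\beta>1$, uniformity of the Laplace expansion in a window of width $O(\gamma^{-(\beta-1)})$ around $\gamma$, control of the cubic remainder and the $\prod x_i^{\beta-1}$ prefactor --- are correctly identified as the technical content to be supplied.
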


Note that the $d$-dimensional multivariate normal distribution appearing in Proposition~\ref{prop:light_weibull_angles} is supported on a $(d-1)$-dimensional subspace.

When the asymptotic angular approximation is unavailable, there are several conceivable alternatives.
One can select a $g_{\bfTheta \mid S}$ from some family of distributions which has the appropriate support.
For instance, if $\bfX$ has non-negative components, then the support of $g_{\bfTheta \mid S}$ is the simplex
$ \mathbb{S}^{d-1} = \{ \bftheta \in \RL_+^d \, : \, \bftheta^\top\bfone = 1 \}$.
To the authors' knowledge, the only commonly known distribution over $\mathbb{S}^{d-1}$ is the Dirichlet distribution, and appears to be a natural candidate.

In some experiments carried out while writing this paper, we sampled $(\bfTheta \mid S > \gamma)$ using MCMC, then used the maximum likelihood Dirichlet fit to the samples as an angular approximation in the polar estimator.
Unfortunately the results were disappointing and are omitted --- the Dirichlet distribution struggles to fit the multimodal angular distributions which are characteristic of subexponential sums conditioned on taking large values.
We also attempted the MCMC flavour of the cross-entropy method as outlined by Chan and Kroese \cite{chan2012improved}, though the multimodality led to extremely high variance estimates (relative to the much simpler Asmussen--Kroese method).

We also performed an approximation of the angular density using Bernstein polynomials.
The angular density $f_{\bfTheta \mid S}(\bftheta \mid s) \propto f_{\bfX}(s \bftheta)$, so it is easy to calculate quantities which are proportional to the desired conditional density.
Using Bernstein polynomials effectively constructed an approximation which was a mixture of Dirichlet distributions using these unnormalised angular density values.
The results for these experiments are also omitted, since the number of mixture components required to create an accurate approximation easily becomes prohibitively large (then, the computation time for evaluating the pdf of the mixture becomes a bottleneck).

\section{Results} \label{Sec:Results}

In this section we give illustrative results of numerical experiments.
For subexponential summands, we compare to the most competitive alternative, the Asmussen--Kroese
estimator, and for light-tailed summands we compare to the standard IS approach of exponential tilting.
In what follows, we adopt Mathematica's parameterisations for the lognormal, Pareto, and Weibull distributions. The code we used is available online \cite{PolarCode}.

\subsection{Subexponential Summands}

Below we present the estimates and the estimated relative errors for the polar estimator and the Asmussen--Kroese estimator for various distributions of $\bfX$. Each estimator is given $R = 10^5$ iid samples of $\bfX$.

The first test (Figs.\ \ref{test1:ests} and \ref{test1:relerrs}) takes the sum of $d=12$ independent lognormal random variables, with marginal distributions $X_i~\sim~\LNDist(-\frac{i}{d}, \sqrt{\frac{i}{d}})$. Here, the sum behaves asymptotically as the dominant term $X_{12} \sim \LNDist(-1, 1)$, and the optimistic angular distribution is used.

\begin{figure}[h]
	\centering
	\includegraphics[width=0.49\textwidth]{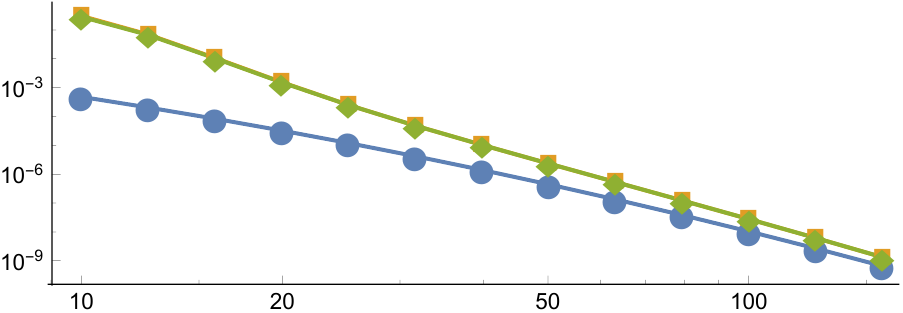} \\
	\includegraphics[height=1em]{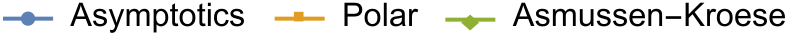}
	\caption{Estimates of $\Prob(S > \gamma)$ from each estimator.}
	\label{test1:ests}
\end{figure}

\begin{figure}[h]
	\centering
	\includegraphics[width=0.49\textwidth]{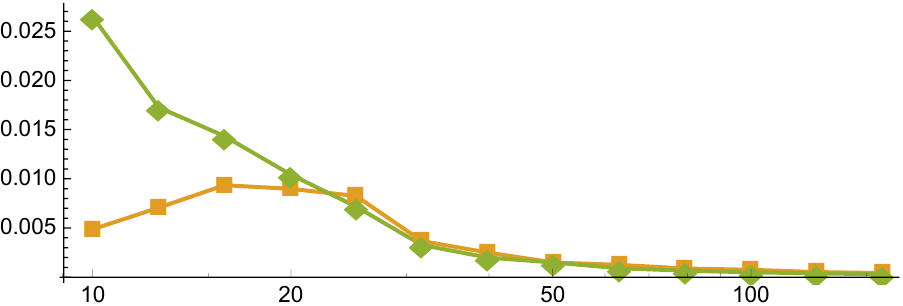} \\
	\includegraphics[height=0.9em]{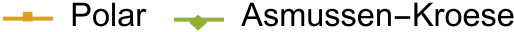}
	\caption{Estimated relative errors for each estimator.}
	\label{test1:relerrs}
\end{figure}

The second test (Figs.\ \ref{test2:ests} and \ref{test2:relerrs}) considers the sum of $d=16$ independent Pareto random variables, where $X_i \sim \ParetoDist(1, i, 0)$. The sum behaves asymptotically as the dominant term $X_1 \sim \ParetoDist(1, 1, 0)$, and the optimistic angular distribution is used.

\begin{figure}[h]
	\centering
	\includegraphics[width=0.49\textwidth]{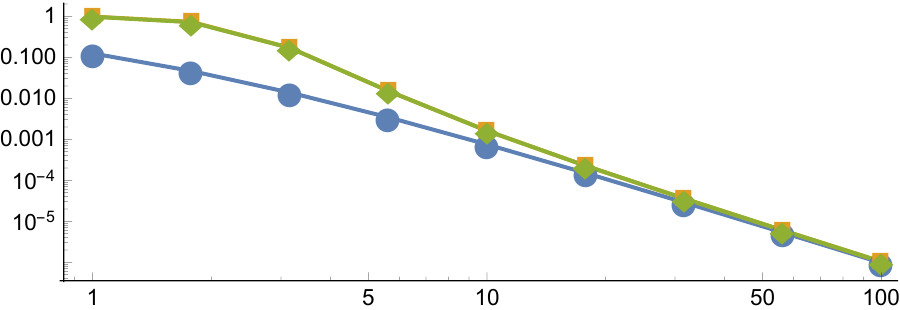} \\
	\includegraphics[height=1em]{legend1a.pdf}
	\caption{Estimates of $\Prob(S > \gamma)$ from each estimator.}
	\label{test2:ests}
\end{figure}

\begin{figure}[h]
	\centering
	\includegraphics[width=0.49\textwidth]{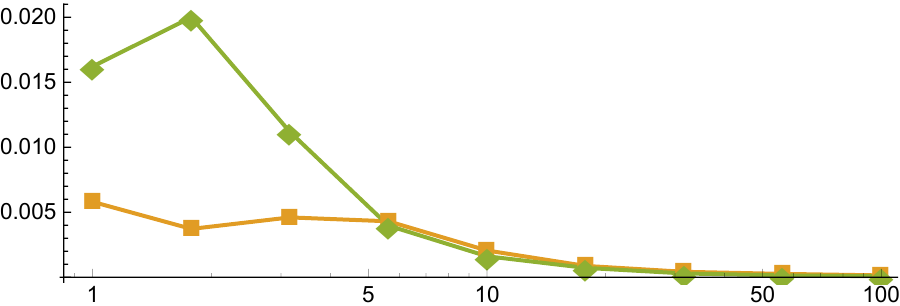} \\
	\includegraphics[height=0.9em]{legend1b.pdf}
	\caption{Estimated relative errors for each estimator.}
	\label{test2:relerrs}
\end{figure}

The third test (Figs.\ \ref{test3:ests} and \ref{test3:relerrs}) considers the sum of $d=8$ independent heavy-tailed Weibull variables. The marginal distributions are $X_i~\sim~\WeibullDist(\frac{1}{4}, \frac{i}{d})$. The sum behaves asymptotically as the last summand $X_8~\sim~\WeibullDist(\frac{1}{4}, 1)$, and the optimistic angular distribution is used.

\begin{figure}[h]
	\centering
	\includegraphics[width=0.49\textwidth]{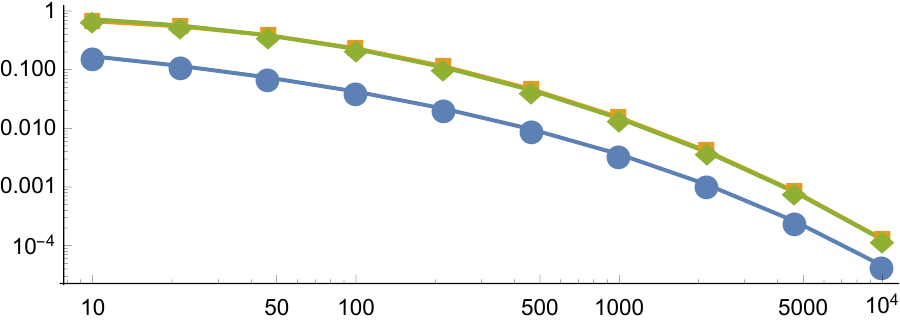} \\
	\includegraphics[height=1.0em]{legend1a.pdf}
	\caption{Estimates of $\Prob(S > \gamma)$ from each estimator.}
	\label{test3:ests}
\end{figure}

\begin{figure}[h]
	\centering
	\includegraphics[width=0.49\textwidth]{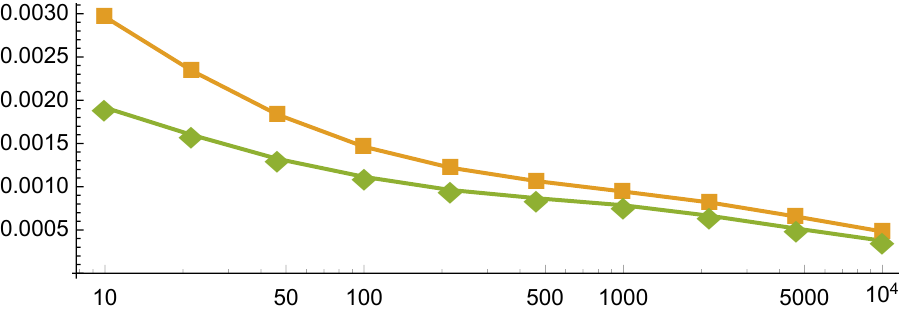} \\
	\includegraphics[height=0.9em]{legend1b.pdf}
	\caption{Estimated relative errors for each estimator.}
	\label{test3:relerrs}
\end{figure}

\subsection{Light-tailed Weibull Summands}

This fourth test (Figs.\ \ref{test4:ests} and \ref{test4:relerrs}) takes the sum of $d=10$ iid light-tailed Weibulls, where $X_i \sim \WeibullDist(2, 1)$. An asymptotic survival function for the sum is given by Proposition~\ref{prop:light_weibull}, and the optimistic angular distribution used is from Proposition~\ref{prop:light_weibull_angles}. Instead of the Asmussen--Kroese method, which is designed for subexponential summands, we have compared the polar estimator against \emph{exponential tilting}. The exponential tilting method is usually easy to implement but it takes some effort in this situation. Simulating each exponentially tilted Weibull variable is done via acceptance--rejection; the proposals come from the gamma distribution which is moment-matched to the asymptotic normal approximation for the exponentially tilted Weibull distribution, cf.\ Section 6 of \cite{asmussen2017tail}.

\begin{figure}[h]
	\centering
	\includegraphics[width=0.49\textwidth]{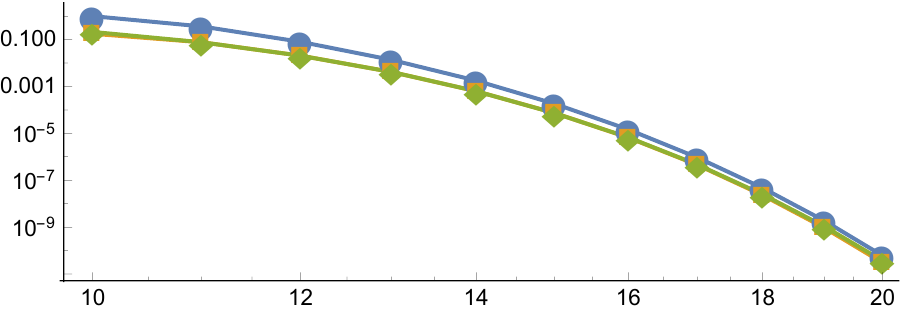} \\
	\includegraphics[height=1.0em]{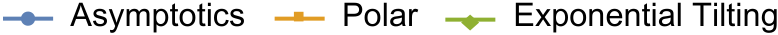}
	\caption{Estimates of $\Prob(S > \gamma)$ from each estimator.}
	\label{test4:ests}
\end{figure}

\begin{figure}[h]
	\centering
	\includegraphics[width=0.49\textwidth]{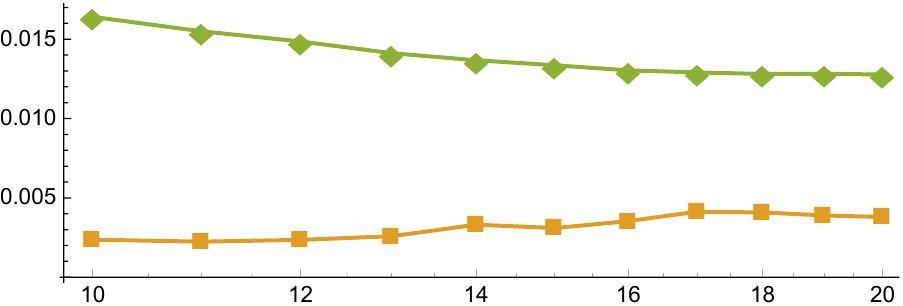} \\
	\includegraphics[height=1.0em]{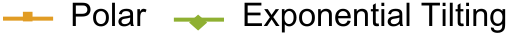}
	\caption{Estimated relative errors for each estimator.}
	\label{test4:relerrs}
\end{figure}

\subsection{Dependent Summands}

Next we reproduce the three subexponential tests above with dependence added by Archimedean copulas. We use the Asmussen--Kroese estimator as outlined in Section~3.2.2.2 of \cite{nandayapa2008risk} as the traditional form of this estimator needs to be adapted for the case of dependent summands.

The fifth test (Figs.\ \ref{test5:ests} and \ref{test5:relerrs}) recreates the first test, with $d=12$ lognormal variables and marginals $X_i~\sim~\LNDist(-\frac{i}{d}, \sqrt{\frac{i}{d}})$, except dependence is added via a $\FrankCop(\frac{1}{2})$ copula.

\begin{figure}[h]
	\centering
	\includegraphics[width=0.49\textwidth]{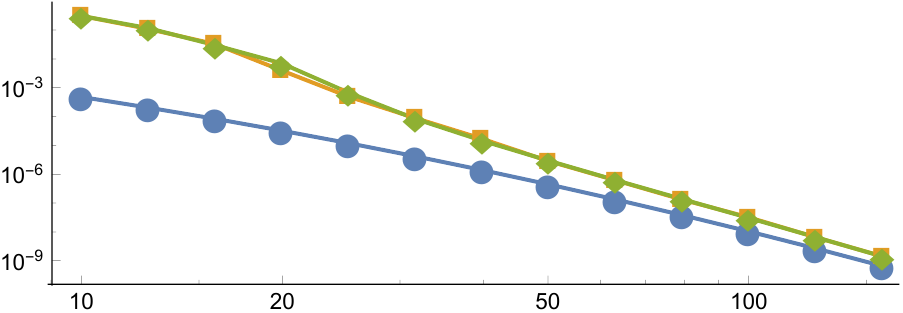} \\
	\includegraphics[height=1.0em]{legend1a.pdf}
	\caption{Estimates of $\Prob(S > \gamma)$ from each estimator.}
	\label{test5:ests}
\end{figure}

\begin{figure}[h]
	\centering
	\includegraphics[width=0.49\textwidth]{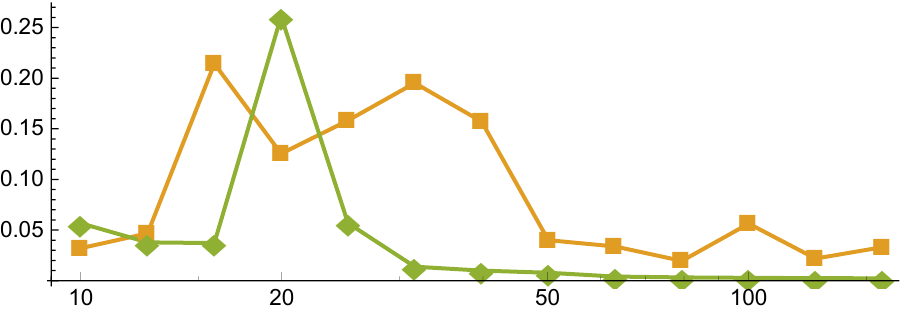} \\
	\includegraphics[height=0.9em]{legend1b.pdf}
	\caption{Estimated relative errors for each estimator.}
	\label{test5:relerrs}
\end{figure}

We see immediately that introducing even a mild level of dependence gives rise to substantially more variability in the polar estimator in the pre-asymptotic regime as
a result of using the optimistic angular distribution --- an illustration of likelihood ratio degeneracy in $d$.
To illustrate this point, we carry out the same test with $d=4$ in Figs.\ \ref{test6:ests} and \ref{test6:relerrs}.

\begin{figure}[h]
	\centering
	\includegraphics[width=0.49\textwidth]{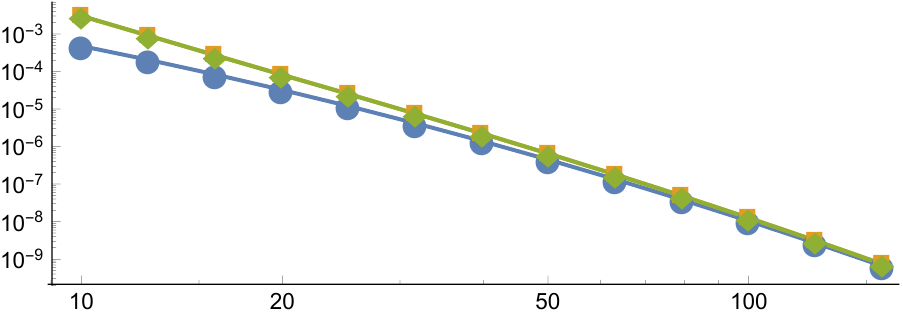} \\
	\includegraphics[height=1.0em]{legend1a.pdf}
	\caption{Estimates of $\Prob(S > \gamma)$ from each estimator.}
	\label{test6:ests}
\end{figure}

\begin{figure}[h]
	\centering
	\includegraphics[width=0.49\textwidth]{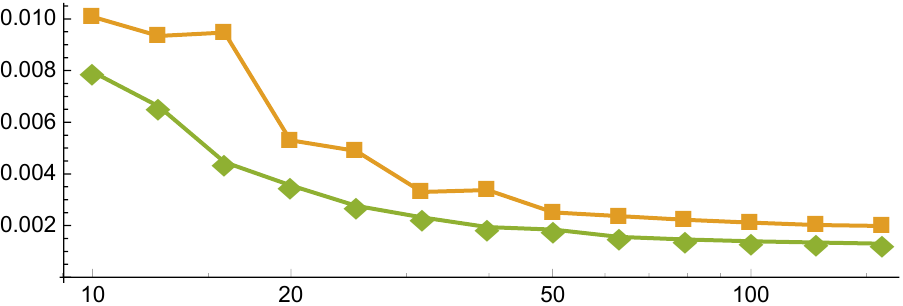} \\
	\includegraphics[height=0.9em]{legend1b.pdf}
	\caption{Estimated relative errors for each estimator.}
	\label{test6:relerrs}
\end{figure}

The seventh test (Figs.\ \ref{test7:ests} and \ref{test7:relerrs}) is similar to the second test above, considering $d=16$ Pareto random variables where $X_i \sim \ParetoDist(1, i, 0)$, except that the summands exhibit dependence via a  $\ClaytonCop(\frac{9}{10})$ copula.

\begin{figure}[h]
	\centering
	\includegraphics[width=0.49\textwidth]{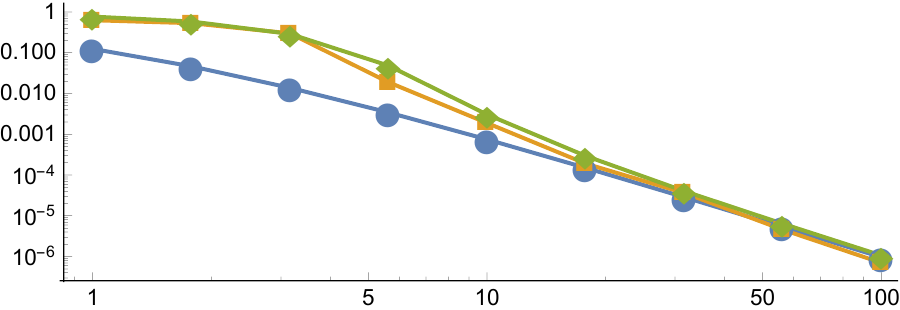} \\
	\includegraphics[height=1.0em]{legend1a.pdf}
	\caption{Estimates of $\Prob(S > \gamma)$ from each estimator.}
	\label{test7:ests}
\end{figure}

\begin{figure}[h]
	\centering
	\includegraphics[width=0.49\textwidth]{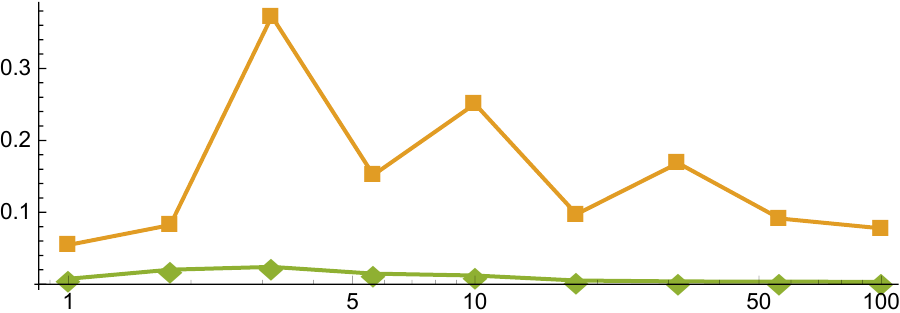} \\
	\includegraphics[height=0.9em]{legend1b.pdf}
	\caption{Estimated relative errors for each estimator.}
	\label{test7:relerrs}
\end{figure}

Once again, we observe significant likelihood ratio degeneracy in the polar estimator, and for comparison we repeat the experiment with $d=4$ in Figs.\ \ref{test8:ests} and \ref{test8:relerrs}.

\begin{figure}[h]
	\centering
	\includegraphics[width=0.49\textwidth]{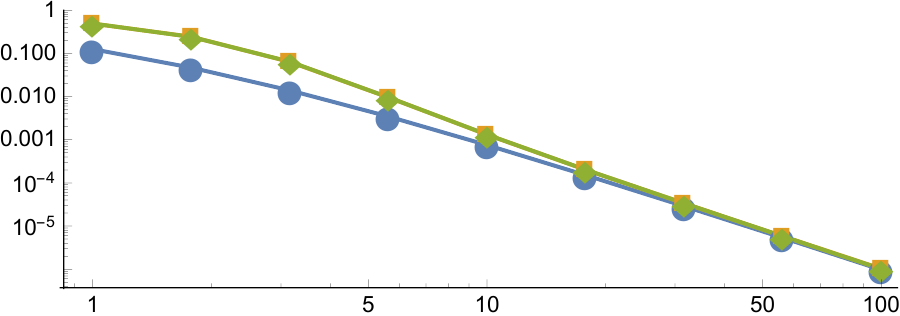} \\
	\includegraphics[height=1.0em]{legend1a.pdf}
	\caption{Estimates of $\Prob(S > \gamma)$ from each estimator.}
	\label{test8:ests}
\end{figure}

\begin{figure}[h]
	\centering
	\includegraphics[width=0.49\textwidth]{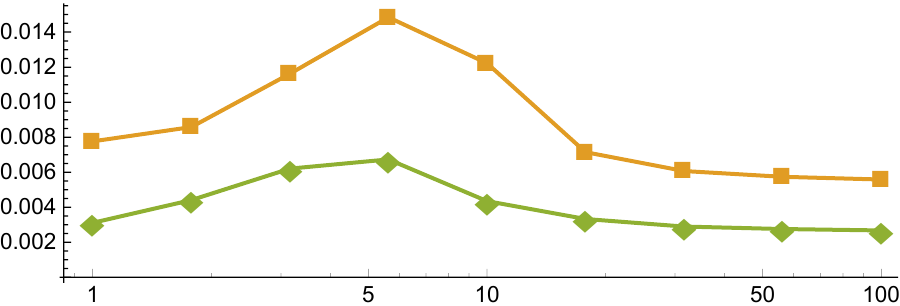} \\
	\includegraphics[height=0.9em]{legend1b.pdf}
	\caption{Estimated relative errors for each estimator.}
	\label{test8:relerrs}
\end{figure}

Test nine (Figs.\ \ref{test9:ests} and \ref{test9:relerrs}) is similar to the third test, with $d=8$ heavy-tailed Weibull variables with marginal distributions $X_i~\sim~\WeibullDist(\frac{1}{4}, \frac{i}{d})$, except with a $\GumbelCop(1.25)$ copula (which is dependent in the extreme).

\begin{figure}[h]
	\centering
	\includegraphics[width=0.49\textwidth]{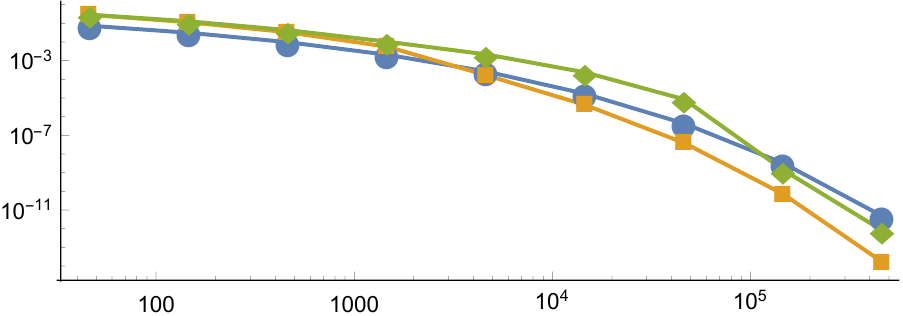} \\
	\includegraphics[height=1.0em]{legend1a.pdf}
	\caption{Estimates of $\Prob(S > \gamma)$ from each estimator.}
	\label{test9:ests}
\end{figure}

\begin{figure}[h]
	\centering
	\includegraphics[width=0.49\textwidth]{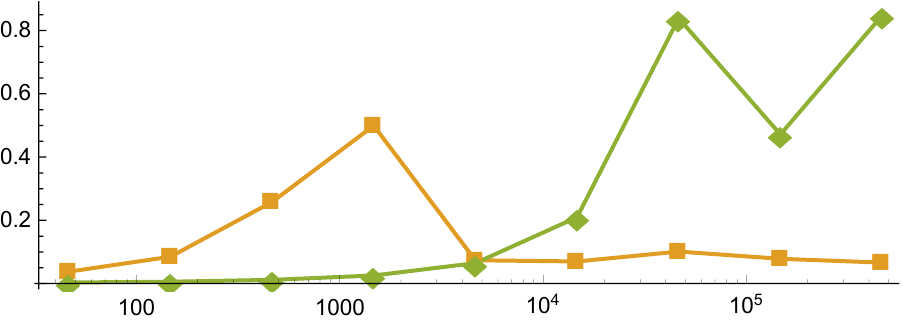} \\
	\includegraphics[height=0.9em]{legend1b.pdf}
	\caption{Estimated relative errors for each estimator.}
	\label{test9:relerrs}
\end{figure}

As one would expect, as $\gamma$ increases, both estimators behave increasingly poorly due to the upper-tail dependence of this copula.
We repeat the experiment in Figs.\ \ref{test10:ests} and \ref{test10:relerrs} with $d=2$ to more clearly illustrate this phenomenon.

\begin{figure}[h]
	\centering
	\includegraphics[width=0.49\textwidth]{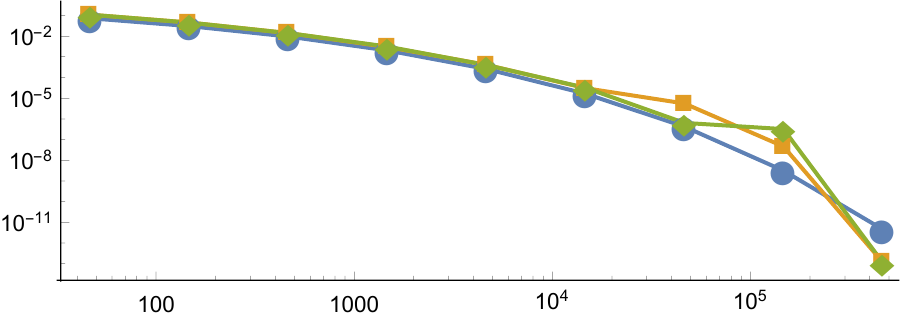} \\
	\includegraphics[height=1.0em]{legend1a.pdf}
	\caption{Estimates of $\Prob(S > \gamma)$ from each estimator.}
	\label{test10:ests}
\end{figure}

\begin{figure}[h]
	\centering
	\includegraphics[width=0.49\textwidth]{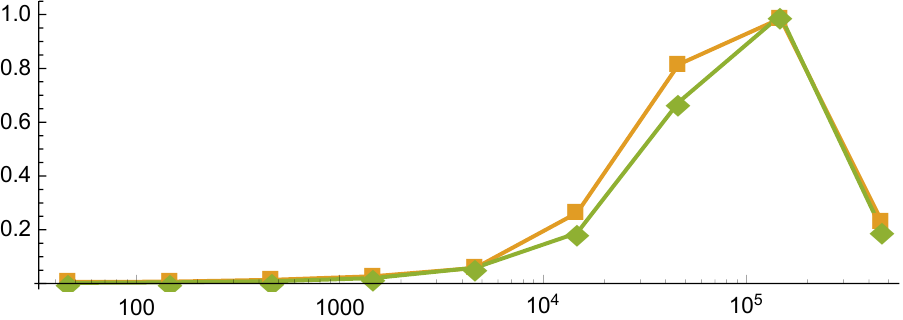} \\
	\includegraphics[height=0.9em]{legend1b.pdf}
	\caption{Estimated relative errors for each estimator.}
	\label{test10:relerrs}
\end{figure}

\section{Conclusion} \label{Sec:Conclusion}

On the tests carried out in this work, our estimator appears to perform on par with the Asmussen--Kroese method for independent subexponential summands, and outperforms all the other methods compared against (i.e.\ the improved cross-entropy method, fitting mixtures of Dirichlet variables, and Bernstein polynomial approximation).

Moreover, for the comparison of iid light-tailed Weibull summands, our estimator outperforms exponential tilting.

However, with the introduction of dependence for subexponential summands (even with upper-tail independence of the copula) the performance of our estimator
degrades rapidly as dimension increases (likelihood ratio degeneracy) as a consequence of utilising the optimistic angular distribution, and
unsurprisingly performs poorly when the copula has upper-tail dependence.
Thus there remains the opportunity for further research into suitable choice of the angular distribution in the case of dependent summands.

\begin{acknowledgements}

This work was supported under the Australian Research Council's Discovery Projects funding scheme (DP180101602). Also, PJL was supported by an Australian Government Research Training Program Scholarship and by the Australian Research Council Centre of Excellence for
Mathematical \& Statistical Frontiers (ACEMS), under grant number CE140100049.

\end{acknowledgements}


\begin{thebibliography}{10}
\providecommand{\url}[1]{{#1}}
\providecommand{\urlprefix}{URL }
\expandafter\ifx\csname urlstyle\endcsname\relax
  \providecommand{\doi}[1]{DOI~\discretionary{}{}{}#1}\else
  \providecommand{\doi}{DOI~\discretionary{}{}{}\begingroup
  \urlstyle{rm}\Url}\fi

\bibitem{alink2004diversification}
Alink, S., L{\"o}we, M., W{\"u}thrich, M.V.: Diversification of aggregate
  dependent risks.
\newblock Insurance: Mathematics and Economics \textbf{35}(1), 77--95 (2004)

\bibitem{alink2007diversification}
Alink, S., L{\"o}we, M., W{\"u}thrich, M.V.: Diversification for general copula
  dependence.
\newblock Statistica Neerlandica \textbf{61}(4), 446--465 (2007)

\bibitem{asmussen2017conditional}
Asmussen, S.: Conditional {M}onte {C}arlo for sums, with applications to
  insurance and finance.
\newblock Annals of Actuarial Science  (2017).
\newblock Submitted, available from {\tt thiele.au.dk/publications}

\bibitem{asmussen2010ruin}
Asmussen, S., Albrecher, H.: Ruin probabilities, \emph{Advanced Series on
  Statistical Science and Applied Probability}, vol.~14, 2 edn.
\newblock World Scientific Publishing Co Pte Ltd, River Edge, NJ (2010).
\newblock \doi{10.1142/9789814282536}.
\newblock \urlprefix\url{http://dx.doi.org/10.1142/9789814282536}.
\newblock Advanced series on statistical science \& applied probability; v. 14

\bibitem{asmussen2007stochastic}
Asmussen, S., Glynn, P.W.: Stochastic Simulation: Algorithms and Analysis,
  \emph{Stochastic Modelling and Applied Probability series}, vol.~57.
\newblock Springer (2007)

\bibitem{asmussen2017tail}
Asmussen, S., Hashorva, E., Laub, P.J., Taimre, T.: Tail asymptotics for
  light-tailed {W}eibull-like sums.
\newblock Probability and Mathematical Statistics \textbf{37}(2) (2017)

\bibitem{asmussen2006improved}
Asmussen, S., Kroese, D.P.: Improved algorithms for rare event simulation with
  heavy tails.
\newblock Advances in Applied Probability \textbf{38}(2), 545--558 (2006)

\bibitem{asmussen2008asymptotics}
Asmussen, S., Rojas-Nandayapa, L.: Asymptotics of sums of lognormal random
  variables with {G}aussian copula.
\newblock Statistics \& Probability Letters \textbf{78}(16), 2709--2714 (2008)

\bibitem{balkema1993densities}
Balkema, A.A., Kl{\"u}ppelberg, C., Resnick, S.I.: Densities with {G}aussian
  tails.
\newblock Proceedings of the London Mathematical Society \textbf{3}(3),
  568--588 (1993)

\bibitem{bingham1989regular}
Bingham, N.H., Goldie, C.M., Teugels, J.L.: Regular Variation,
  \emph{Encyclopedia of Mathematics and its Applications}, vol.~27.
\newblock Cambridge university press, Cambridge (1989)

\bibitem{botev2017fast}
Botev, Z., Salomone, R., MacKinlay, D.: Fast and accurate computation of the
  distribution of sums of dependent log-normals.
\newblock arXiv preprint arXiv:1705.03196  (2017)

\bibitem{chan2012improved}
Chan, J.C., Kroese, D.P.: Improved cross-entropy method for estimation.
\newblock Statistics and Computing \textbf{22}(5), 1031--1040 (2012)

\bibitem{deHaanResnick1977}
De~Haan, L., Resnick, S.I.: Limit theory for multivariate sample extremes.
\newblock Zeitschrift f{\"u}r Wahrschein-lichkeitstheorie und verwandte Gebiete
  \textbf{40}, 317--377 (1977)

\bibitem{FalkReiss2005}
Falk, M., Reiss, R.D.: On {P}ickands coordinates in arbitrary dimensions.
\newblock Journal of Multivariate Analysis \textbf{92}(2), 426--453 (2005)

\bibitem{fischione2007approximation}
Fischione, C., Graziosi, F., Santucci, F.: Approximation for a sum of on-off
  lognormal processes with wireless applications.
\newblock IEEE Transactions on Communications \textbf{55}(10), 1984--1993
  (2007)

\bibitem{foss2011introduction}
Foss, S., Korshunov, D., Zachary, S.: An Introduction to Heavy-tailed and
  Subexponential Distributions, vol.~6, 2 edn.
\newblock Springer (2013)

\bibitem{foss2010sums}
Foss, S., Richards, A.: On sums of conditionally independent subexponential
  random variables.
\newblock Mathematics of Operations Research \textbf{35}(1), 102--119 (2010)

\bibitem{glasserman2003monte}
Glasserman, P.: Monte Carlo Methods in Financial Engineering, \emph{Stochastic
  Modelling and Applied Probability series}, vol.~53.
\newblock Springer (2003)

\bibitem{klugman2012loss}
Klugman, S.A., Panjer, H.H., Willmot, G.E.: Loss models: from data to
  decisions, vol. 715.
\newblock John Wiley \& Sons (2012)

\bibitem{kroese2013handbook}
Kroese, D.P., Taimre, T., Botev, Z.I.: Handbook of Monte Carlo Methods, vol.
  706.
\newblock John Wiley \& Sons (2013)

\bibitem{mcneil2015quantitative}
McNeil, A.J., Frey, R., Embrechts, P.: Quantitative Risk Management: Concepts,
  Techniques and Tools, 2nd edn.
\newblock Princeton University Press (2015)

\bibitem{nadarajah2008review}
Nadarajah, S.: A review of results on sums of random variables.
\newblock Acta Applicandae Mathematicae \textbf{103}(2), 131--140 (2008)

\bibitem{nandayapa2008risk}
Nandayapa, L.R.: Risk probabilities: asymptotics and simulation.
\newblock Ph.D. thesis, University of Aarhus, Department of Mathematical
  Sciences (2008)

\bibitem{Rueschendorf2013}
R{\"u}schendorf, L.: Mathematical {R}isk {A}nalysis.
\newblock Springer (2013)

\bibitem{salmon2009recipe}
Salmon, F.: Recipe for disaster: The formula that killed {W}all {S}treet
  (2009).
\newblock Online 23/02/2009

\bibitem{PolarCode}
Taimre, T., Laub, P.J.: Online accompaniment for ``Rare tail approximation
  using asymptotics and $L^1$ polar coordinates'' (2018).
\newblock Available at
  \url{https://github.com/Pat-Laub/PolarRareTailApproximation}

\bibitem{wuthrich2003asymptotic}
W{\"u}thrich, M.V.: Asymptotic value-at-risk estimates for sums of dependent
  random variables.
\newblock Astin Bulletin \textbf{33}(01), 75--92 (2003)

\bibitem{yao2016estimating}
Yao, H., Rojas-Nandayapa, L., Taimre, T.: Estimating tail probabilities of
  random sums of infinite mixtures of phase-type distributions.
\newblock In: Proceedings of the 2016 Winter Simulation Conference, pp.
  347--358. IEEE Press (2016)

\end{thebibliography}
\end{document}